\documentclass[a4paper,USenglish,10p]{article}
\usepackage{etex}
\usepackage{amssymb,amsmath,amsthm}
\usepackage[vlined,ruled,linesnumbered]{algorithm2e}
\usepackage[utf8]{inputenc} 
\usepackage[T1]{fontenc}
\usepackage{pgf}
\usepackage{tikz}
\usepackage[sort,comma,square,numbers]{natbib}
\usepackage{xspace}
\usepackage{todonotes}
\usepackage[notcite,notref,final]{showkeys}
\usepackage{paralist}
\usepackage{textcomp} %
\usepackage{mathtools}

\usepackage[bookmarks=false, pdftex]{hyperref}

\title{Listing Conflicting Triples in Optimal Time}

\usepackage{authblk}
\usepackage[margin=25mm]{geometry}

\author{Mathias Weller}
\affil{CNRS, LIGM, Université Paris Est, Marne-la-Vallée, France}

\usetikzlibrary{arrows,shapes,positioning,calc,decorations.pathreplacing,decorations.markings,decorations.pathmorphing,patterns}
\pgfdeclarelayer{background2}
\pgfdeclarelayer{background}
\pgfdeclarelayer{foreground}
\pgfsetlayers{background2,background,main,foreground}

\tikzstyle{vertex}=[circle, draw, fill=white]
\tikzstyle{reti}=[vertex, fill=gray]
\tikzstyle{leaf}=[vertex, rectangle]
\tikzstyle{small}=[inner sep=2pt]
\tikzstyle{smallvertex}=[vertex, small]
\tikzstyle{smallleaf}=[leaf, small]
\tikzstyle{smallreti}=[reti, small]
\tikzstyle{edge}=[draw,-]
\tikzstyle{matching}=[edge,line width=3pt]
\tikzstyle{solution}=[gray!60, line width=5pt]
\tikzstyle{arc}=[draw,decoration={markings,mark=at position 1 with {\arrow[scale=.8]{latex}}}, postaction={decorate}]
\tikzstyle{revarc}=[draw, decoration={markings,mark=at position 0 with {\arrow[scale=.8,rotate=180]{latex}}}, postaction={decorate}]
\tikzstyle{bold}=[draw, line width=2pt]
\tikzstyle{boldarc}=[bold, arc]
\tikzstyle{optional}=[dashed]
\tikzstyle{path}=[decorate, decoration={snake, amplitude=.6mm}]

\newcommand{\drawchildren}[5]{
  \node[smallvertex, #4] (\detokenize{#1}0) at ($(#1)-(45:#2)$) {} edge[revarc] (#1);
  \node[smallvertex, #5] (\detokenize{#1}1) at ($(#1)-(135:#3)$) {} edge[revarc] (#1);
}
\newcommand{\drawleaves}[5]{
  \drawchildren{#1}{#2}{#3}{leaf, #4}{leaf, #5}
}

\colorlet{darkgreen}{green!50!black}
\colorlet{medgray}{gray!75}
\colorlet{lightgray}{gray!30}
\definecolor{linkcol}{rgb}{0,0,0.4} 
\definecolor{citecol}{rgb}{0.5,0,0} 

\ifx\newdefinition\undefined
  \let\newdefinition\newtheorem
\fi
\ifx\definition\undefined
  \newdefinition{definition}{Definition}
  \newtheorem{theorem}{Theorem}
  \newtheorem{lemma}{Lemma}
  
\fi

\newtheorem{observation}{Observation}

\newtheorem{obs}{Observation}

\newdefinition{construction}{Construction}

\ifx\qedhere\undefined
  \newcommand\qedhere{\hfill\qed}
\fi

\usepackage{float}
\floatstyle{plain}
\newfloat{multialg}{thp}{lop}
\floatname{multialg}{Algorithm}

\newcommand{\dist}[2][]{\ensuremath{\operatorname{dist}}\ifx\relax#1\relax\else\ensuremath{_{#1}}\fi\ensuremath{(#2)}}
\newcommand{\adist}[2][]{\ensuremath{\overline{\operatorname{dist}}}\ifx\relax#1\relax\else\ensuremath{_{#1}}\fi\ensuremath{(#2)}}
\renewcommand{\deg}[2][]{\ensuremath{\operatorname{deg}}\ifx\relax#1\relax\else\ensuremath{_{#1}}\fi\ensuremath{(#2)}}
\newcommand{\nh}[2][]{\ensuremath{\operatorname{N}}\ifx\relax#1\relax\else\ensuremath{_{#1}}\fi\ensuremath{(#2)}}
\newcommand{\inc}[2][]{\ensuremath{\operatorname{inc}}\ifx\relax#1\relax\else\ensuremath{_{#1}}\fi\ensuremath{(#2)}}
\newcommand{\LCA}[2][]{\ensuremath{\operatorname{LCA}}\ifx\relax#1\relax\else\ensuremath{_{#1}}\fi\ensuremath{(#2)}}

\newcommand{\com}[2]{\ensuremath{#1 \sqcap #2}}
\newcommand{\unc}[2]{\ensuremath{#1 \wr #2}}

\def\T{\mathcal{T}}
\newcommand\lind[2]{\ensuremath{#1|_{#2}}}

\newcommand{\leaves}{\ensuremath{\mathcal{L}}}

\let\emptyset\varnothing

\ifx\backref\undefined
\else
\renewcommand*{\backref}[1]{}
\renewcommand*{\backrefalt}[4]{%
\ifcase #1 %
(Not cited.)%
\or
(Cited on page~#2.)%
\else
(Cited on pages~#2.)%
\fi}

\fi

\makeatletter
\def\NAT@spacechar{~}
\makeatother

\newcommand{\raus}[1]{}

\newcommand{\probdef}[5]{
\hbox{\vbox{
\begin{quote}
  \label{#5}
  \ifthenelse{\equal{#3}{}}{}{{#3}\ifthenelse{\equal{#4}{}}{}{ ({#4})}}
  \begin{compactdesc}
    \item [Input:] {#1}
    \item [Question:] {#2}
  \end{compactdesc}
\end{quote}
}}
}

\newcommand{\taskprobdef}[5]{
\hbox{\vbox{
\begin{quote}
  \label{#5}
  \ifthenelse{\equal{#3}{}}{}{{#3}\ifthenelse{\equal{#4}{}}{}{ ({#4})}}
  \begin{compactdesc}
    \item [Input:] {#1}
    \item [Task:] {#2}
  \end{compactdesc}
\end{quote}
}}
}

\newcommand{\paraproblem}[6]{
\hbox{\vbox{
\begin{quote}
  \label{#6}
  \ifthenelse{\equal{#4}{}}{}{{#4}\ifthenelse{\equal{#5}{}}{}{ ({#5})}}
  \begin{compactdesc}
    \item [Input:] {#1}
    \item [Question:] {#2}
    \item [Parameter:] {#3}
  \end{compactdesc}
\end{quote}
}}
}

\def\nmid{\hspace{-1pt}{\not|}}

\hypersetup
{
bookmarksopen=false,
pdftoolbar=false, %
pdfmenubar=true, %
pdfhighlight=/O, %
colorlinks=true, %
pdfpagemode=UseNone, %
pdfpagelayout=SinglePage, %
pdffitwindow=true, %
linkcolor=linkcol, %
citecolor=citecol, %
urlcolor=linkcol %
}

\makeatletter
\renewcommand\bibsection%
{
  \section*{\refname
    \@mkboth{\MakeUppercase{\refname}}{\MakeUppercase{\refname}}}
}
\makeatother

 \begin{document}

\maketitle

\begin{abstract}
  Different sources of information might tell different stories about the evolutionary history of a given set of species.
  This leads to (rooted) phylogenetic trees that ``disagree'' on triples of species, which we call ``conflict triples''.
  An important subtask of computing consensus trees which is interesting in its own regard is the enumeration of all
  conflicts exhibited by a pair of phylogenetic trees (on the same set of $n$ taxa).
  As it is possible that a significant part of the $\binom{n}{3}$ triples are in conflict,
  the trivial $\theta(n^3)$-time algorithm that checks for each triple whether it constitutes a conflict, was considered optimal.
  It turns out, however, that we can do way better in the case that there are only few conflicts.
  In particular, we show that we can enumerate all $d$ conflict triples between a pair of phylogenetic trees in $O(n+d)$ time.
  Since any deterministic algorithm has to spend $\Theta(n)$ time reading the input and $\Theta(d)$ time writing the output,
  no deterministic algorithm can solve this task faster than we do (up to constant factors).
\end{abstract}

\section{Introduction}\label{sec:intro}

In bioinformatics -- more precisely, phylogenetics -- evolutionary trees (``phylogenetic trees'')
are one of the fundamental types of data representation and, thus, among the most important objects being algorithmically
analyzed and manipulated.
A phylogenetic tree visualizes the evolutionary history of a set of taxa (e.g.\ a family of genes, a collection of species, etc.).
However, different sources of information might imply different evolutionary histories of the same taxa.
Such contradictions manifest themselves as ``conflict triples'' (sometimes also ``conflict triplets''), that is,
three taxa, say~$a$, $b$, and~$c$ such that
one phylogenetic tree~$P$ implies that a common ancestor of $a$ and $b$ split off the common lineage of $a$, $b$ and $c$ before splitting into $a$ and $b$
while another tree~$Q$ implies that a common ancestor of $b$ and $c$ split off the common lineage before splitting into $b$ and $c$.
More formally, $\LCA[P]{ab}\ne\LCA[P]{abc}$ and $\LCA[Q]{bc}\ne\LCA[Q]{ab}=\LCA[Q]{abc}$.
See \autoref{fig:conflict} for an example.

\begin{figure}[t]
  \centering
  \begin{tikzpicture}[scale=.3]
    \node at (0,3) {$P$};
    \foreach \i/\l in {0/A, 1/B, 2/C, 3/D, 4/E} \node[smallleaf, label=below:$\l$] (l\i) at (2*\i,0) {};
    \node[smallvertex] (v0) at (1,1) {} edge[arc] (l0) edge[arc] (l1);
    \node[smallvertex] (v1) at (5,1) {} edge[arc] (l2) edge[arc] (l3);
    \node[smallvertex] (v2) at (6,2) {} edge[arc] (v1) edge[arc] (l4);
    \node[smallvertex] (v3) at (4,4) {} edge[arc] (v0) edge[arc] (v2);
  \end{tikzpicture}
  \hspace{10mm}
  \begin{tikzpicture}[scale=.3]
    \node at (0,3) {$Q$};
    \foreach \i/\l in {0/A, 1/B, 2/C, 3/D, 4/E} \node[smallleaf, label=below:$\l$] (l\i) at (2*\i,0) {};
    \node[smallvertex] (v0) at (1,1) {} edge[arc] (l0) edge[arc] (l1);
    \node[smallvertex] (v1) at (7,1) {} edge[arc] (l3) edge[arc] (l4);
    \node[smallvertex] (v2) at (6,2) {} edge[arc] (v1) edge[arc] (l2);
    \node[smallvertex] (v3) at (4,4) {} edge[arc] (v0) edge[arc] (v2);
  \end{tikzpicture}
  \caption{Two phylogenetic trees $P$ and $Q$ with conflict $CDE$ (boxes = leaves, circles = inner vertices). In particular, $CD|_PE$ and $DE|_QC$.}
  \label{fig:conflict}
\end{figure}
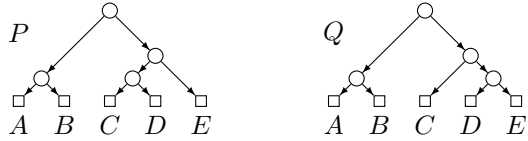

Conflict triples are essential ingredients to algorithms building so-called ``supertrees'', that is,
phylogenetic trees that merge evolutionary histories into one that is ``most consistent''~\cite{journal-JBS06,journal-BGJ10}.
Conflict triples can also be used to reconcile gene trees into a single phylogeny
by building a so-called ``triplet-based median supertree''~\cite{journal-RCD10}.
The problem of \emph{counting} conflict triples has been used to measure the distance between phylogenetic trees.
\citet{conf-BFM+13} show how to compute this number in $O(n\log n)$~time.
A recent study of the problem of finding a consensus tree given a set of disagreeing phylogenetic trees~\cite{CJL+17}
makes heavy use of the list of all conflict triples between any two of the input trees,
but does not detail how to enumerating them efficiently.
Here, we address this problem, showing how to enumerate all $d$ conflict triples of a pair $(P,Q)$ of phylogenetic trees on $n$ taxa in $O(n+d)$~time.
Since all algorithms solving this problem need to read the input (size $\Theta(n)$) and write the output (size $\Theta(d)$),
this is asymptotically ``best possible''.

While \emph{counting} the number of conflicts has received some attention in the past~\cite{conf-BFM+13},
not much work has been done on \emph{enumerating} them.
Such development might have been discouraged by the fact that a significant portion of the $\binom{n}{3}$ triples of taxa might be in conflict,
in which case the trivial algorithm that tests each triple of taxa for being a conflict would be optimal.
This work emerged from the question whether we can do better if only few triples are actually in conflict.
While preliminary works in this direction focussed on decision problems~\cite{conf-HW07, arxiv-MNN17, arxiv-FLP+15},
we consider an enumeration-type problem here.
Indeed, the concept of measuring the complexity in the size of the input \emph{and} the output
is fairly well known as \emph{output sensitivity} in the context of enumeration algorithms.
Running in $O(n+d)$ time where $n$ is the size of the input and $d$ is the size of the output,
our algorithm can be called \emph{totally linear}.

\section{Preliminaries}\label{sec:prelims}

A \emph{(phylogenetic) tree} is a rooted, binary%
\footnote{While we only consider binary phylogenetic trees in this work, I conjecture that it easily generalizes.}
outbranching whose leaves are bijectively labeled by a set $X$ (of taxa)
and we refer to its root by $r(T)$.
Since the labeling is bijective, we use leaves and labels interchangeably.
If some vertex $v$ of $T$ is a strict ancestor of a vertex $u$ in $T$, we write $u<_T v$ and we abbreviate $\mathop{\forall}_{v\in Z} v<_T u$ to $Z<_T u$.
We also abbreviate sets of leaves (or labels) by the concatenation of their names, that is, $abc$ refers to $\{a,b,c\}$.
The \emph{least common ancestor} of two leaves (or labels) $a$ and $b$ in $T$ is the minimum among all $u$ with $ab<_T u$ and we write $\LCA[T]{ab}=u$.
In this work a \emph{triple} $abc$ in $T$ is a set of three labels $abc\subseteq X$.
We say that $abc$ \emph{touches} $\LCA[T]{abc}$ and omit the mention of $T$ if it is clear from context.
We say a triple $abc$ is \emph{$ab$-biased} in $T$ if $\LCA[T]{ab}\ne\LCA[T]{abc}$ and we write $ab|_Tc$ to indicate this fact.
A triple $abc$ is called a \emph{conflict} of a pair $(P,Q)$ of trees if, for some $xy\subseteq abc$,
we have that $abc$ is $xy$-biased in exactly one of $P$ and $Q$ (see \autoref{fig:conflict}).
Recall that $abc$ and $cab$ refers to the same conflict, so when claiming that $abc$ is not listed twice,
this also means that no two permutations of $abc$ are listed.

For two vertices $u\in V(P)$ and $v\in V(Q)$, we define $\com{u}{v}:= \leaves(P_u)\cap \leaves(Q_v)$ and $\unc{u}{v}:=\leaves(P_u)\setminus\leaves(Q_v)$.
Note that $\com{}$ is symmetrical while $\unc{}$ is not.

\begin{obs}\label{obs:symmetry}
  Let $P$ and $Q$ be phylogenetic trees on the same leaf-set.
  Let $r_p$ and $r_q$ be the roots of $P$ and $Q$, respectively, and
  let $u_p$, $v_p$ and $u_q$, $v_q$ be their respective children.
  Then, $\unc{u_p}{u_q} = \com{u_p}{v_q} = \com{v_q}{u_p} = \unc{v_q}{v_p}$.
\end{obs}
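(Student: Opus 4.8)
The plan is to reduce the whole statement to elementary set algebra, using only the structural fact that the root of a binary phylogenetic tree has exactly two children whose leaf-sets partition the taxon set. Concretely, since $P$ is a rooted binary outbranching, every leaf of $P$ is a descendant of exactly one of $u_p$ and $v_p$, so $\leaves(P_{u_p})$ and $\leaves(P_{v_p})$ partition $X$; symmetrically, $\leaves(Q_{u_q})$ and $\leaves(Q_{v_q})$ partition $X$. In particular, $X\setminus\leaves(Q_{u_q})=\leaves(Q_{v_q})$ and $X\setminus\leaves(P_{v_p})=\leaves(P_{u_p})$.

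With these identities in hand I would just chase the definitions. Since $\leaves(P_{u_p})\subseteq X$, the identity $A\setminus B = A\cap(X\setminus B)$ for $A\subseteq X$ gives
\[
  \unc{u_p}{u_q} = \leaves(P_{u_p})\setminus\leaves(Q_{u_q}) = \leaves(P_{u_p})\cap\bigl(X\setminus\leaves(Q_{u_q})\bigr) = \leaves(P_{u_p})\cap\leaves(Q_{v_q}) = \com{u_p}{v_q}.
\]
The equality $\com{u_p}{v_q}=\com{v_q}{u_p}$ is exactly the symmetry of $\com{}$ noted right after its definition (both sides denote $\leaves(P_{u_p})\cap\leaves(Q_{v_q})$). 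Finally, using $\leaves(Q_{v_q})\subseteq X$ together with $X\setminus\leaves(P_{v_p})=\leaves(P_{u_p})$,
\[
  \com{v_q}{u_p} = \leaves(Q_{v_q})\cap\leaves(P_{u_p}) = \leaves(Q_{v_q})\cap\bigl(X\setminus\leaves(P_{v_p})\bigr) = \leaves(Q_{v_q})\setminus\leaves(P_{v_p}) = \unc{v_q}{v_p},
\]
which closes the chain of equalities.

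I do not expect a genuine obstacle here: the only point deserving an explicit word is the partition claim for the two children of each root, which is immediate from binariness of the trees and the fact that $P$ and $Q$ share the leaf-set $X$. Everything else is the single set-theoretic identity $A\setminus B=A\cap(X\setminus B)$ for $A\subseteq X$ plus commutativity of intersection, so the argument is essentially syntactic once the partition observation is recorded.
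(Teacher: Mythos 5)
Your proof is correct and is precisely the reasoning the paper leaves implicit: the observation is stated there without proof, and your chain of equalities---using that the leaf-sets of the two root children partition $X$ in each tree, plus the identity $A\setminus B=A\cap(X\setminus B)$ for $A\subseteq X$ and commutativity of intersection---is exactly the intended justification. Nothing is missing or superfluous.
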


In the following, we call a tree $T$ \emph{LCA-enabled} if the LCA of any two vertices in $T$ can be found in constant time.
Note that we can LCA-enable any tree in linear time~\cite{journal-HT84,conf-BF00}.

In the algorithm, we will want to compute the subtree $T'$ of a tree $T$ that is induced by a set $Z$ of leaves.
If $Z$ is ordered by an in-order or post-order traversal of $T$,
then this can be done in $O(|Z|)$ time~\cite[Section 8]{journal-CFR+00}.
The idea is that the inner vertices of $T'$ are exactly the LCAs of consecutive (wrt.\ the order) leaves in $Z$ and
the arcs between them can be computed by looking at the nearest, lower vertex on the left and right of each inner
vertex of $T'$ according to the order.

\begin{obs}[{\cite[Section 8]{journal-CFR+00}}]\label{lem:induced subtree}
  Let $T$ be an LCA-enabled tree
  and let $Z\subseteq\leaves(T)$ be in post-order.
  Then, $\lind{T}{Z}$ can be computed in $O(|Z|)$ time.
\end{obs}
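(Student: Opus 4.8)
The plan is to realise $\lind{T}{Z}$ explicitly as a tree whose vertex set is $Z$ together with a set $\{m_1,\dots,m_{k-1}\}$ of vertices of $T$, in two phases: first pin down which vertices of $T$ survive as the internal vertices of the restriction, then reconstruct the parent pointers by a single linear scan. Throughout I write $Z=\{z_1,\dots,z_k\}$ in the given post-order, and I would open with the remark that restricting the post-order traversal of $T$ to leaves lists them left to right (post-order and in-order induce the same order on leaves, differing only in the positions of internal vertices), so the $z_i$ come out left to right. The case $k\le1$ is trivial, so assume $k\ge2$.

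\emph{Phase 1 (the internal vertices).} For $1\le i<k$ I would set $m_i:=\LCA[T]{z_iz_{i+1}}$, computed in $O(1)$ each and hence all in $O(k)$ total, and then show that $\{m_1,\dots,m_{k-1}\}$ is exactly the set of internal vertices of $\lind{T}{Z}$ and that the $m_i$ are pairwise distinct (so $\lind{T}{Z}$ has the expected $k-1$ internal vertices). A vertex $v$ of $T$ is internal in $\lind{T}{Z}$ iff both subtrees below $v$'s two children contain a leaf of $Z$; taking the rightmost such leaf $z_i$ on the left and the leftmost such leaf $z_{i+1}$ on the right --- consecutive in the order, since the leaves below $v$ form a contiguous block --- gives $v=\LCA[T]{z_iz_{i+1}}=m_i$, and conversely each $m_i$ has $z_i,z_{i+1}$ below distinct children of it and so is internal. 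For distinctness: if $m_i=m_j=v$ with $i<j$, then $z_{i+1}$ lies below the child of $v$ avoiding $z_i$ and $z_{j+1}$ below the child avoiding $z_j$, and tracing the left-to-right order through $v$'s two contiguous leaf-blocks forces a contradiction.

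\emph{Phase 2 (the parent pointers).} I would form the length-$(2k-1)$ sequence $\sigma=(z_1,m_1,z_2,m_2,\dots,z_{k-1},m_{k-1},z_k)$ of (pairwise distinct) vertices of $T$, keyed by depth in $T$, and argue that $\lind{T}{Z}$ is the min-Cartesian tree of $\sigma$. Its unique minimum is $\LCA[T]{z_1z_k}$, which equals $\LCA[T]{z_1z_2\cdots z_k}$ (the LCA of a set of leaves is the LCA of its leftmost and rightmost), hence is an ancestor of every entry of $\sigma$ and so is the strictly shallowest one; by Phase 1 it is some $m_{i^*}$, and it is exactly the root of $\lind{T}{Z}$. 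It splits $Z$ into the contiguous blocks $z_1\cdots z_{i^*}$ and $z_{i^*+1}\cdots z_k$, which is precisely where $\sigma$ is cut at the occurrence of $m_{i^*}$; the two pieces are the $\sigma$-sequences of $\lind{T}{\{z_1,\dots,z_{i^*}\}}$ and $\lind{T}{\{z_{i^*+1},\dots,z_k\}}$, the two subtrees of the root, so the statement follows by induction on $k$ (each $z_i$ is strictly deeper in $T$ than its one or two neighbours $m_{i-1},m_i$, which is why the $z_i$ end up as the leaves and the $m_i$ as the internal vertices). Since a Cartesian tree of a length-$\ell$ sequence is built in $O(\ell)$ time by one left-to-right scan maintaining the current rightmost path on a stack --- and every comparison it makes is between a vertex and one of its $T$-ancestors, hence answerable in $O(1)$ by an LCA query (or one can simply use the depth array that every linear-time LCA-enabling computes) --- the whole procedure runs in $O(k)=O(|Z|)$ time.

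\emph{Main obstacle.} The real content is the combinatorial identification in Phase 1 together with the recursive correspondence in Phase 2; the rest is bookkeeping. The point needing most care is that the procedure must never inspect a vertex of $T$ outside $Z\cup\{m_1,\dots,m_{k-1}\}$, so that its running time depends on $|Z|$ alone and not on $|T|$ --- exactly what lets the routine be called many times without paying $\Theta(n)$ per call. The post-order-versus-in-order subtlety is a smaller trap, disarmed by the opening remark.
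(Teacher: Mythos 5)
Your proposal is correct and follows essentially the same route as the paper, which does not prove the observation itself but cites \cite[Section 8]{journal-CFR+00} and sketches exactly this construction: the internal vertices of $\lind{T}{Z}$ are the LCAs of consecutive leaves of $Z$, and the arcs are recovered from the nearest shallower vertex to the left and right in the interleaved sequence --- which is precisely your depth-keyed min-Cartesian tree, spelled out with the stack-based linear scan. Your Phase~1/Phase~2 write-up just makes the paper's one-sentence sketch rigorous (including the binary-tree distinctness point and the $O(|Z|)$, $|T|$-independent cost), so there is nothing genuinely different to compare.
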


Furthermore, for leaf-labelled trees $P$ and $Q$ and vertices $u$ and $v$ of $P$ and $Q$, respectively,
we will want to detect whether $\leaves(P_u)=\leaves(Q_v)$ in constant time.
To this end, we construct a mapping $m$ that maps each vertex $x$ of $P$ to the unique vertex $y$ of $Q$ that is lowest among
all vertices of $Q$ satisfying $\leaves(P_x)\subseteq\leaves(Q_y)$.
Note that, $m(x) = \LCA[Q]{m(x'), m(x'')}$ where $x'$ and $x''$ are the children of $x$ in $P$ and, thus,
$m$ can be computed in $O(|P|+|Q|)$ time if $Q$ is LCA-enabled.
Finally, we only need to know the number of leaves reachable from each vertex of $P$ and $Q$,
which can easily be computed in $O(|P|+|Q|)$ time.

\begin{obs}\label{obs:leaf-set equiv}
  Let $P$ and $Q$ be phylogenetic trees on the same leaf-set and
  let $Q$ be LCA-enabled.
  Then, there is a linear-time preprocessing that allows answering if $\leaves(P_u)=\leaves(Q_v)$ in constant time for each $u$ and $v$.
\end{obs}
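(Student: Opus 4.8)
The plan is to carry out exactly the preprocessing sketched above and then show that it supports a constant-time equality test. Concretely, I would (i) LCA-enable $Q$ (free, by assumption), (ii) compute for every vertex $x$ of $P$ the vertex $m(x)$ of $Q$ defined as the \emph{lowest} vertex $y$ with $\leaves(P_x)\subseteq\leaves(Q_y)$, and (iii) annotate every vertex of $P$ and every vertex of $Q$ with the number of leaves below it. Steps (i) and (iii) are standard linear-time computations. Step (ii) is done bottom-up along $P$: for a leaf $x$, we set $m(x)$ to be the leaf of $Q$ carrying the same label (well-defined since $P$ and $Q$ share their leaf-set), and for an internal vertex $x$ with children $x'$ and $x''$ we set $m(x):=\LCA[Q]{m(x'),m(x'')}$, which costs one LCA-query per vertex of $P$. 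In total, the preprocessing runs in $O(|P|+|Q|)$ time.

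Next I would justify the recursion for $m$. Since $\leaves(P_x)=\leaves(P_{x'})\cup\leaves(P_{x''})$, a vertex $y$ of $Q$ satisfies $\leaves(P_x)\subseteq\leaves(Q_y)$ exactly when it satisfies both $\leaves(P_{x'})\subseteq\leaves(Q_y)$ and $\leaves(P_{x''})\subseteq\leaves(Q_y)$. A short tree argument (using that in a phylogenetic tree every strict descendant of a vertex has a strictly smaller leaf-set, and that incomparable vertices have disjoint leaf-sets) shows that the vertices $y$ with $\leaves(P_{x'})\subseteq\leaves(Q_y)$ are precisely $m(x')$ and its ancestors; hence the vertices that work for $x$ are the common ancestors of $m(x')$ and $m(x'')$, whose lowest element is $\LCA[Q]{m(x'),m(x'')}$. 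This is $m(x)$, as claimed, and the base case is immediate.

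Finally, for a query on $u\in V(P)$ and $v\in V(Q)$, I would show that $\leaves(P_u)=\leaves(Q_v)$ if and only if $m(u)=v$ \emph{and} $|\leaves(P_u)|=|\leaves(Q_v)|$. For the forward direction: $\leaves(P_u)=\leaves(Q_v)$ gives $\leaves(P_u)\subseteq\leaves(Q_v)$, and if some $w<_Q v$ also had $\leaves(P_u)\subseteq\leaves(Q_w)$ then $\leaves(Q_v)=\leaves(P_u)\subseteq\leaves(Q_w)\subsetneq\leaves(Q_v)$, a contradiction; so $v$ is the lowest such vertex, i.e.\ $v=m(u)$, and the leaf-counts agree trivially. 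For the backward direction: $m(u)=v$ yields $\leaves(P_u)\subseteq\leaves(Q_v)$, and equality of the (finite) cardinalities upgrades this to set equality. Both conditions are tested in $O(1)$ from the stored values $m(u)$, $|\leaves(P_u)|$ and $|\leaves(Q_v)|$, giving the claimed bound.

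I do not expect a real obstacle here; the only points requiring care are that the leaf-count test is genuinely needed — since $m(u)=v$ alone does not force $\leaves(P_u)=\leaves(Q_v)$, as one may well have $\leaves(Q_{m(u)})\supsetneq\leaves(P_u)$ — and that the base case of $m$ relies on $P$ and $Q$ carrying the same label set; both are trivial for binary phylogenetic trees.
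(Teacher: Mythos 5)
Your proposal is correct and follows exactly the paper's argument: build the map $m$ bottom-up via $m(x)=\LCA[Q]{m(x'),m(x'')}$ using the LCA-enabled $Q$, store leaf counts for all vertices of $P$ and $Q$, and answer a query by checking $m(u)=v$ together with $|\leaves(P_u)|=|\leaves(Q_v)|$. You merely spell out the justification of the recursion and of the equality criterion, which the paper leaves implicit.
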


\section{The Algorithm}\label{sec:algo}

Given two phylogenetic trees $P$ and $Q$ on the label-set $X$, our algorithm will first list all conflict triples $abc$ that touch $r(P)$ or $r(Q)$ and
then recurse into specific induced subtrees of $P$ and $Q$ such that,
the conflicts in these subtrees are exactly the conflicts between $P$ and $Q$ that do not touch $r(P)$ and $r(Q)$.
The observation that being a conflict triple is invariant under deletion of unrelated leaves implies the correctness of this approach.

\begin{multialg}[t]
  \begin{procedure}[H]
    \caption{ListCommonRootConflicts()}
    \KwIn{Trees $P$ \& $Q$ on $X$, a child $x_p$ of $r(P)$, a child $x_q$ of $r(Q)$}
    \KwOut{Conflict triples $abc$ with $ab\leq x_p$ touching $r(P)$ and $r(Q)$}
    \lForEach{$a\in\com{x_p}{x_q}$ and $b\in\unc{x_p}{x_q}$ and $c\in X\setminus\leaves(x_p)$}{%
      \textbf{list} $abc$%
    }
  \end{procedure}
  \begin{procedure}[H]
    \caption{ListUncommonRootConflicts()}
    \KwIn{Trees $P$ \& $Q$ on $X$, a child $x_p$ of $r(P)$, a child $x_q$ of $r(Q)$}
    \KwOut{Conflict triples $abc\leq x_p$ touching $r(Q)$ (but not $r(P)$)}
    \lForEach{$a,b\in\com{x_p}{x_q}$ and $c\in\unc{x_p}{x_q}$ with $ab\nmid_P c$}{%
      \textbf{list} $abc$%
    }
    \lForEach{$a,b\in\unc{x_p}{x_q}$ and $c\in\com{x_p}{x_q}$ with $ab\nmid_P c$}{%
      \textbf{list} $abc$%
    }
  \end{procedure}
  \begin{procedure}[H]
    \caption{ListAllConflicts()}
    \KwIn{Trees $P$ \& $Q$}
    \KwOut{Conflict triples of $(P,Q)$}
    \If{$|\leaves(P)|>1$}{
      $(u_p,u_q),(v_p,v_q)\gets$ arbitrary pairing of children of $r(P)$ \& $r(Q)$\;\nllabel{ln:pairs}
      \ForEach{$(x_p,x_q)\in \{(u_p,u_q),(v_p,v_q)\}$}{
        compute $\com{x_p}{x_q}$, $\unc{x_p}{x_q}$ and $\unc{x_q}{x_p}$\;
        $\ListCommonRootConflicts(P,Q,x_p,x_q)$\;
        $\ListUncommonRootConflicts(P,Q,x_p,x_q)$\;
        $\ListUncommonRootConflicts(Q,P,x_q,x_p)$\;
        $\ListAllConflicts(\lind{P}{\com{x_p}{x_q}}, \lind{Q}{\com{x_p}{x_q}})$\;
      }
      $\ListAllConflicts(\lind{P}{\unc{u_p}{u_q}}, \lind{Q}{\unc{v_q}{v_p}})$\;
      $\ListAllConflicts(\lind{P}{\unc{v_p}{v_q}}, \lind{Q}{\unc{u_q}{u_p}})$\;
    }
  \end{procedure}
  \caption[First shot at triplet enumeration]{First shot at triplet enumeration. Note that, although theoretically unnecessary, we provide $x_q$ to the calls to \ListCommonRootConflicts and \ListUncommonRootConflicts, since this lets us use the pre-computed  sets $\com{x_p}{x_q}$ and $\unc{x_p}{x_q}$ and $\unc{x_q}{x_p}$.\label{alg:first shot}}
\end{multialg}

\begin{observation}\label{lem:recurse}
  Let $Y\subseteq X$, and
  let $abc\subseteq Y$.
  Then,
  $abc$ is a conflict triple of $(P,Q)$
  if and only if
  $abc$ is a conflict triple of $(\lind{P}{Y},\lind{Q}{Y})$.
\end{observation}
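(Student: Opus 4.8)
The plan is to reduce the statement to a purely combinatorial fact about a single tree, which I will call $(\star)$: for any phylogenetic tree $T$ on $X$, any $Y\subseteq X$, any triple $abc\subseteq Y$, and any pair $xy\subseteq abc$, the triple $abc$ is $xy$-biased in $T$ if and only if it is $xy$-biased in $\lind{T}{Y}$. Granting $(\star)$, the observation is immediate from the definition of conflict: $abc$ is a conflict of $(P,Q)$ exactly when some pair $xy\subseteq abc$ is $xy$-biased in exactly one of $P$ and $Q$, and applying $(\star)$ once with $T=P$ and once with $T=Q$ shows that, for each such pair, the two ``$xy$-biased'' statuses are preserved simultaneously when we pass to $\lind{P}{Y}$ and $\lind{Q}{Y}$; hence the ``exactly one'' condition, and therefore the conflict status, is unchanged.

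For $(\star)$ it suffices to show that restriction to $Y$ preserves least common ancestors of $Y$-leaves, in the following sense: under the natural identification of the vertices of $\lind{T}{Y}$ with a subset of $V(T)$, we have $\LCA[{\lind{T}{Y}}]{S}=\LCA[T]{S}$ for every non-empty $S\subseteq Y$. Applying this to $S=xy$ and to $S=abc$ and observing that ``$\ne$'' is preserved then yields $(\star)$. I would prove the leaf-LCA preservation via the ``LCA-closure'' description of $\lind{T}{Y}$ underlying the construction recalled above (cf.\ \cite[Section 8]{journal-CFR+00}): the vertex set of $\lind{T}{Y}$ is exactly $V_Y:=\{\LCA[T]{S}\mid \emptyset\ne S\subseteq Y\}$ and, under this identification, the ancestor relation of $\lind{T}{Y}$ is exactly $<_T$ restricted to $V_Y$.

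Two facts justify this description. First, $V_Y$ is closed under taking LCAs in $T$, and the construction of \cite[Section 8]{journal-CFR+00} --- taking LCAs of consecutive leaves along a traversal and joining each such vertex to its nearest lower neighbour on either side --- produces precisely the poset $(V_Y,<_T)$: the LCAs of consecutive leaves already generate $V_Y$, and ``nearest lower neighbour'' is exactly the parent relation of $(V_Y,<_T)$. Second, no vertex of $V_Y$ is contracted away: a leaf of $Y$ is obviously kept, and any $\LCA[T]{S}$ with $|S|\ge 2$ is, since $T$ is binary and $\LCA[T]{S}$ lies above neither of its children, a vertex with a $Y$-leaf below each of its two children, hence a genuine branching vertex of the minimal subtree spanning $Y$. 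Given the description, for $a,b\in Y$ the vertex $w:=\LCA[T]{ab}\in V_Y$ is a common ancestor of $a$ and $b$ in $\lind{T}{Y}$, and since it is the $<_T$-lowest vertex lying above both $a$ and $b$ among all vertices of $T$ it is a fortiori the lowest such among the vertices in $V_Y$; thus $\LCA[{\lind{T}{Y}}]{ab}=w=\LCA[T]{ab}$. Iterating, using $\LCA[T]{w,c}=\LCA[T]{abc}\in V_Y$ together with the same minimality argument applied now to the two vertices $w,c\in V_Y$, gives $\LCA[{\lind{T}{Y}}]{abc}=\LCA[{\lind{T}{Y}}]{w,c}=\LCA[T]{abc}$, which is everything we need.

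I expect the one real obstacle to be the second of these facts --- pinning down exactly which vertices of $T$ survive in $\lind{T}{Y}$ --- together with a consistent treatment of the root (it becomes $\LCA[T]{Y}$ once the chain of degree-two vertices from $r(T)$ down to $\LCA[T]{Y}$ is suppressed). Once the vertex set and ancestor order of $\lind{T}{Y}$ are identified with $(V_Y,<_T)$, everything downstream is the short chain of minimality observations above.
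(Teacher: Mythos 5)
Your proof is correct. The paper offers no proof of this statement at all---it is presented as an Observation, backed only by the preceding one-line remark that being a conflict triple is invariant under deletion of unrelated leaves---and your argument, reducing the claim to the preservation of least common ancestors under leaf-restriction (i.e.\ $\LCA[{\lind{T}{Y}}]{S}=\LCA[T]{S}$ for nonempty $S\subseteq Y$, via the LCA-closure description of $\lind{T}{Y}$), is exactly the standard justification the paper leaves implicit, carried out completely and without gaps.
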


\begin{observation}\label{obs:all conflicts}
  Let $abc$ be a conflict triple of $(P,Q)$ that touches neither $r(P)$ nor $r(Q)$.
  Let $u_p$ and $v_p$ be the children of $r(P)$ and let $u_q$ and $v_q$ be the children of $r(Q)$.
  Then, $abc$ is completely contained in $\com{u_p}{u_q}$, $\com{u_p}{v_q}$, $\com{v_p}{u_q}$, or $\com{v_p}{v_q}$.
\end{observation}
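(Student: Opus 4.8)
The plan is to prove this purely structurally, using only the definition of ``touching'' and the fact that $P$ and $Q$ are binary; the hypothesis that $abc$ is a \emph{conflict} is not needed, and I would remark on this since the algorithm invokes the statement exactly in this more general form. First I would use that $abc$ does not touch $r(P)$, i.e.\ $\LCA[P]{abc}\ne r(P)$. Then $\LCA[P]{abc}$ is a vertex of $P$ distinct from its root, hence a proper descendant of $r(P)$; and since $P$ is binary, the vertex sets of $P_{u_p}$ and $P_{v_p}$ partition $V(P)\setminus\{r(P)\}$, so $\LCA[P]{abc}$ lies in $P_{u_p}$ or in $P_{v_p}$, and therefore $\leaves(P_{\LCA[P]{abc}})$ is contained in $\leaves(P_{u_p})$ or in $\leaves(P_{v_p})$. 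As $a,b,c<_P\LCA[P]{abc}$, all of $a$, $b$, $c$ lie in $\leaves(P_{\LCA[P]{abc}})$, so there is a child $w_p\in\{u_p,v_p\}$ of $r(P)$ with $abc\subseteq\leaves(P_{w_p})$.

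I would then run the identical argument on $Q$ (using that $abc$ does not touch $r(Q)$) to obtain a child $w_q\in\{u_q,v_q\}$ of $r(Q)$ with $abc\subseteq\leaves(Q_{w_q})$. Intersecting the two inclusions gives $abc\subseteq\leaves(P_{w_p})\cap\leaves(Q_{w_q})=\com{w_p}{w_q}$, and as $(w_p,w_q)$ ranges over $\{u_p,v_p\}\times\{u_q,v_q\}$ this set is one of $\com{u_p}{u_q}$, $\com{u_p}{v_q}$, $\com{v_p}{u_q}$, $\com{v_p}{v_q}$, which is exactly the claimed conclusion.

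I do not expect a genuine obstacle; the argument is a short structural observation. The one step that deserves a sentence of care is the assertion ``$a,b,c<_P\LCA[P]{abc}$'', namely that the least common ancestor of the triple is a strict ancestor of each of its three leaves — but this follows directly from the definition of $\LCA$ as the minimum $u$ with $abc<_P u$, extending the two-leaf definition in the obvious way.
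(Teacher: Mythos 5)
Your proof is correct, and since the paper states this as an unproved observation, your short structural argument (locate $\LCA[P]{abc}$ and $\LCA[Q]{abc}$ below a root-child in each tree and intersect the two leaf-set inclusions) is exactly the reasoning the paper implicitly relies on. Your side remark that the conflict hypothesis is not actually needed is also accurate.
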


\noindent
Note that the four sets mentioned in \autoref{obs:all conflicts} are disjoint, and so,
no conflict can be contained in any two of them.
Then, our algorithm can be described as the following recursion (see \autoref{alg:first shot} for a detailed description):
\begin{description}
  \item[Base Case:]
    If $r(P)$ and $r(Q)$ are leaves, then return without listing anything.
  \item[Recursion:]
    First, choose an arbitrary pairing $\{(u_p,u_q), (v_p,v_q)\}$ of the children of $r(P)$ and $r(Q)$.
    Second, list all conflict triples $abc$ touching $r(P)$ or $r(Q)$.
    Third, recursively list all conflict triples of
    \begin{compactenum}
      \item $(\lind{P}{\com{u_p}{u_q}},\lind{Q}{\com{u_p}{u_q}})$,
      \item $(\lind{P}{\com{v_p}{v_q}},\lind{Q}{\com{v_p}{v_q}})$,
      \item $(\lind{P}{\com{u_p}{v_q}},\lind{Q}{\com{u_p}{v_q}})$ and
      \item $(\lind{P}{\com{v_p}{u_q}},\lind{Q}{\com{v_p}{u_q}})$.
    \end{compactenum}
\end{description}

\begin{procedure}[t]
  \caption{ListSubtreeConflicts()}
  \KwIn{Tree $T$, leaf subset $Z\subseteq\leaves(T)$ in post-order}
  \KwOut{Triples $abc$ with $a,b\in Z$, and $c\in \leaves(T)\setminus Z$, and $ab\nmid_T c$}

  \If{$Z\ne\emptyset$}{
    \ForEach{$c\in\leaves(T)\setminus Z$}{
      $T'\gets \lind{T}{Z\cup\{c\}}$\;\nllabel{ln:Tprime}
      $y\gets$ parent of $c$\;\nllabel{ln:start y}
      \While{$y\ne r(T')$\nllabel{ln:while}}{
        $y'\gets$ sibling of $y$ in $T'$\;
        \lForEach{$a\in \leaves(T'_y)\setminus\{c\}$ and $b\in\leaves(T'_{y'})$}{\textbf{list} $abc$}
        $y\gets$ parent of $y$ in $T'$\;\nllabel{ln:next y}
      }
    }
  }
\end{procedure}

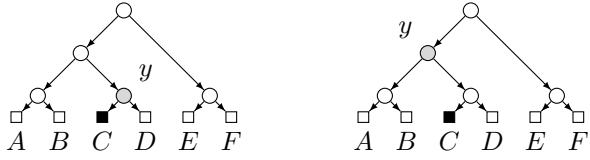
\begin{figure}[t]
  \centering
  \begin{tikzpicture}[scale=.4]
    \node[smallvertex] (v) {};
    \drawchildren{v}{2}{4}{}{}
    \drawchildren{v0}{2}{2}{}{fill=lightgray, label=above right:$y$}
    \drawleaves{v00}{1}{1}{label=below:$A$}{label=below:$B$}
    \drawleaves{v01}{1}{1}{fill=black,label=below:$C$}{label=below:$D$}
    \drawleaves{v1}{1}{1}{label=below:$E$}{label=below:$F$}
  \end{tikzpicture}
  \hspace{10mm}
  \begin{tikzpicture}[scale=.4]
    \node[smallvertex] (v) {};
    \drawchildren{v}{2}{4}{fill=lightgray, label=above left:$y$}{}
    \drawchildren{v0}{2}{2}{}{}
    \drawleaves{v00}{1}{1}{label=below:$A$}{label=below:$B$}
    \drawleaves{v01}{1}{1}{fill=black,label=below:$C$}{label=below:$D$}
    \drawleaves{v1}{1}{1}{label=below:$E$}{label=below:$F$}
  \end{tikzpicture}
  \caption[Illustration of T']{An example illustrating the tree $T'$ in two steps of \ListSubtreeConflicts (gray = vertex~$y$, black = leaf~$c$ with label~$C$).
    Left: first step ($y$ is the parent of $c$), listing $DAC$ and $DBC$.
    Right: second step, listing all $abC$, with $a\in\{A,B,D\}$ and $b\in\{E,F\}$.}
  \label{fig:LSC}
\end{figure}

We defer showing correctness in favor of introducing some modifications that allow achieving our running-time goal.
In order to see why this is necessary, let us analyze \ListAllConflicts.
This requires a closer look at how many triples are listed in each recursive step.
\ListCommonRootConflicts unconditionally lists
  $|\mbox{\com{x_p}{x_q}}|\cdot|\mbox{\unc{x_p}{x_q}}|\cdot |X\setminus\leaves(x_p)|$
conflicts for each pair $(x_p,x_q)$ of the chosen pairing.
However, \ListUncommonRootConflicts has to perform
numerous
checks of the type ``$ab|c$?''.
Since it is possible that none of these triples is a conflict, we cannot bound these operations in the number of listed conflicts.
Instead, we use \ListSubtreeConflicts to list all the triples $abc$ with $a,b\in\com{x_p}{x_q}$ and $c\in\unc{x_p}{x_q}$ (or vice versa),
and $ab\nmid_P c$ in constant time per listed triple (see \autoref{fig:LSC} for an illustration).
The idea is
\begin{inparaenum}[(i)]
  \item to focus on the subtree $P'$ of $P$ that is rooted at $\LCA[P]{\com{x_p}{x_q}}$,
  \item to pick any leaf $c\in\unc{x_p}{x_q}$ and,
  \item for each $y$ on the unique path from $c$ to $r(P')$, listing all triples $abc$ for which $a$ and $c$ are ``below $y$'' and $b$ is not,
    thereby ensuring $\LCA[T]{ac}\ne\LCA[T]{abc}$.
\end{inparaenum}
We will thus replace the first for-loop of \ListUncommonRootConflicts by a call to
$\ListSubtreeConflicts(P, \com{x_p}{x_q})$ and the second for-loop with a call to $\ListSubtreeConflicts(P, \unc{x_p}{x_q})$.

\begin{lemma}\label{lem:LSC correct}
  \ListSubtreeConflicts is correct, that is, it outputs a triple $abc$ if and only if $a,b\in Z$, $c\notin Z$, and $ab\nmid_T c$.
  Further, the procedure takes $O(d)$ time (where $d$ is the total number of listed triples) and no triple is listed twice.
\end{lemma}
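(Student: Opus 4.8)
\noindent\emph{Proof plan.} I would establish the four parts of the statement — soundness (only triples $abc$ with $a,b\in Z$, $c\notin Z$, $ab\nmid_T c$ are output), completeness (all such triples are output), uniqueness (none twice), and the $O(d)$ bound — by analysing, for each leaf $c$ visited by the outer loop, the auxiliary tree $T':=\lind{T}{Z\cup\{c\}}$, whose leaf set is precisely $Z\cup\{c\}$ by \autoref{lem:induced subtree}. Throughout I will use the invariance that a triple inside $Z\cup\{c\}$ is $ab$-biased in $T'$ iff it is $ab$-biased in $T$ — the phenomenon underlying \autoref{lem:recurse} — so that I may argue about $\nmid_{T'}$ and translate the conclusion to $\nmid_T$.

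Soundness and completeness both follow from one structural fact about $T'$. When the while loop sits at a vertex $y\ne r(T')$ of the $c$-to-root path, with sibling $y'$, the body lists exactly the triples $a'b'c$ with $a'\in\leaves(T'_y)\setminus\{c\}$ and $b'\in\leaves(T'_{y'})$; since $c<_{T'}y$ and $\leaves(T'_y)\cap\leaves(T'_{y'})=\emptyset$ we get $a',b'\ne c$, hence $a',b'\in Z$, and since $a',c<_{T'}y$ while $b'<_{T'}y'$, all of $a',b',c$ have least common ancestor at the parent of $y$, which equals $\LCA[T']{a'b'}$, so $a'b'\nmid_{T'}c$; this is soundness. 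For completeness, given $a,b\in Z$, $c\notin Z$ and $ab\nmid_T c$, work in the $T'$ built for $c$ and put $\ell:=\LCA[T']{ab}=\LCA[T']{abc}$: the two children of $\ell$ separate $a$ from $b$, and $c<_{T'}\ell$ lies below exactly one of them, namely the child $y$ of $\ell$ on the $c$-to-root path; when the loop reaches that $y$, the pair $(a,b)$ or $(b,a)$ — according to which side of $\ell$ contains $c$ — lies in $(\leaves(T'_y)\setminus\{c\})\times\leaves(T'_{y'})$, so $\{a,b,c\}$ is listed.

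For uniqueness I would use three observations. A qualifying triple has a single element outside $Z$, namely $c$, and no iteration other than that of $c$ lists any triple whose unique non-$Z$ leaf is $c$; so it can be listed only while $c$ is processed. Within that iteration it cannot be listed at two distinct path-vertices $y<_{T'}z$ (with siblings $y'$, $z'$), because then $\leaves(T'_y)\cup\leaves(T'_{y'})\subseteq\leaves(T'_z)$ would force both $a$ and $b$ into $\leaves(T'_z)$, contradicting that listing at $z$ puts one of them in the disjoint set $\leaves(T'_{z'})$. And at a fixed $y$ the ordered pairs $(a,b)$ and $(b,a)$ cannot both be enumerated, since $\leaves(T'_y)$ and $\leaves(T'_{y'})$ are disjoint.

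The running time is the substantive part and, I expect, the main obstacle. Two facts carry it. First, every execution of the while-loop body lists at least one triple — $\leaves(T'_y)\setminus\{c\}$ contains the nonempty leaf set below $c$'s sibling in $T'$, and $\leaves(T'_{y'})$ is nonempty — so the $O(1)$ bookkeeping of each while-iteration is charged to output and all while loops together cost $O(d)$. Second, whenever the while loop runs for a given $c$ it lists $\Omega(|Z|)$ triples: letting $c=p_0,p_1,\dots,p_k=r(T')$ be the $c$-to-root path in $T'$ and $q_i$ the sibling of $p_{i-1}$, the sets $\leaves(T'_{q_1}),\dots,\leaves(T'_{q_k})$ partition $\leaves(T')\setminus\{c\}$; the execution at $y=p_1$ alone lists $|\leaves(T'_{q_1})|\cdot|\leaves(T'_{q_2})|\ge|\leaves(T'_{q_1})|$ triples, while all executions together list at least $\sum_{i\ge2}|\leaves(T'_{q_i})|=|Z|-|\leaves(T'_{q_1})|$ triples, and the larger of these two bounds is $\ge|Z|/2$. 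Hence the $O(|Z|)$ cost of building $T'$ for $c$ (\autoref{lem:induced subtree}, after assembling $Z\cup\{c\}$ in post-order in $O(|Z|)$ time) is dominated by the triples it yields, and summing over $c$ gives $O(d)$. The delicate step is a leaf $c$ visited by the outer loop for which the while loop never runs — equivalently, $c$ incomparable in $T$ to $\LCA[T]{Z}$ — which outputs nothing yet costs $O(|Z|)$; this is exactly why one first restricts to the subtree of $T$ rooted at $\LCA[T]{Z}$ (point~(i) of the surrounding discussion), after which every visited $c$ lies strictly below $r(T')$, so the while loop always runs and the accounting applies as stated, leaving only the trivial case $\leaves(T_{\LCA[T]{Z}})=Z$ — which admits no qualifying triples — to be dismissed by an $O(|Z|)$ check.
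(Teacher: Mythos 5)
Your proposal is correct, and its core mechanism is the same as the paper's: for each visited leaf $c$, build $T'=\lind{T}{Z\cup\{c\}}$ via \autoref{lem:induced subtree}, walk up from $c$, and pay for the $O(|Z|)$ construction of $T'$ with the $\Omega(|Z|)$ triples the walk emits. Your soundness, completeness and no-duplicates arguments match the paper's almost verbatim (the paper pins uniqueness on the single vertex $\LCA{ab}$ that separates $a$ from $b$; your ancestor/disjointness argument is the same fact), and the counting difference is cosmetic: the paper lower-bounds the output of the \emph{last} while-iteration by $(|\leaves(T'_y)|-1)\cdot|\leaves(T'_{y'})|\ge|Z|-1$, while you compare the first iteration against the remaining ones and get $|Z|/2$; both give $\Omega(|Z|)$ per run of the loop. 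Where you genuinely go beyond the paper is the case you call delicate: a visited $c$ whose while-loop never runs (equivalently, $c$ not strictly below $\LCA[T]{Z}$) still costs $\Theta(|Z|)$ for building $T'$, and the outer loop as printed ranges over all of $\leaves(T)\setminus Z$, so for the literal pseudocode the $O(d)$ bound fails (take $Z$ to be two sibling leaves deep in a large tree: $d=0$ yet $\Theta(n)$ time is spent). The paper's proof is silent on this point -- it only accounts for iterations in which the while-loop runs -- whereas you repair it by restricting the outer loop to leaves strictly below $\LCA[T]{Z}$, in line with the paper's informal point~(i), and this loses nothing because every qualifying triple satisfies $c<\LCA[T]{ab}\le\LCA[T]{Z}$. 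The one thing to state honestly is that even your repaired procedure spends $O(|Z|)$ in the degenerate case $\leaves(T_{\LCA[T]{Z}})=Z$, so what you actually prove is an $O(|Z|+d)$ bound rather than $O(d)$; that weaker bound is exactly what the enclosing analysis of the main algorithm needs (the caller in that branch already pays $O(|X|)$ with $|X|\le d_r+2$), so your version is a sound -- indeed more complete -- substitute for the paper's proof of this lemma.
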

\begin{proof}
  We first show the first equivalence.

  ``$\Rightarrow$'':
  Let $abc$ be a listed triple.
  Then, there is some $y$ with $c< y<r(T')$ 
  with sibling $y'$ such that $a\in \leaves(T'_y)\setminus\{c\}$ and $b\in\leaves(T'_{y'})$ (by symmetry among $ab$).
  But then, $a,b\in Z$, and $c\notin Z$ and $ac<_{T'} y$ and $b\leq_{T'} y'$,
  implying $ac|_{T'} b$ and, thus, $ac|_T b$.

  ``$\Leftarrow$'':
  Let $abc$ be a triple with $a,b\in Z$, $c\notin Z$ and $ab\nmid_T c$.
  Then, $|Z\ne\emptyset$, and $c\in\leaves(T)\setminus Z$.
  Since $ab\nmid_T c$, we have $\LCA[T]{ab}=\LCA[T]{abc}$ and,
  by symmetry among $ab$, we suppose $\LCA[T]{ac}<\LCA[T]{abc}$.
  Let $y$ and $y'$ be the children of $\LCA[T']{abc}$ with $a,c<_{T'} y$
  and note that $y$ will be reached by the while-loop.
  Clearly, $a\in\leaves(T'_y)$, and $b\in\leaves(T'_{y'})$, and, thus, $abc$ is listed.

  \smallskip
  Second, suppose that any triple $abc$ is listed twice.
  As $y$ and $y'$ are siblings in each iteration of the while-loop, $abc$ is listed for two different values of $y$.
  However, there is a single vertex (namely $\LCA{ab}$) for which neither $ab\subseteq\leaves(T'_y)$ nor $ab\subseteq\leaves(T'_{y'})$.
  Thus, there is a single iteration for which $abc$ can be output.

  \smallskip
  Finally, we show the claimed running time.
  We start by showing that, each time the while-loop is run, it outputs at least $|Z|-1$ triples.
  To this end, consider $y'$ and its sibling $y$ in any last iteration of the while-loop (that is, the parent of $y$ and $y'$ is $r(T')$).
  Then, the number of triples that are listed is 
  $|\leaves(T'_y)-1|\cdot|\leaves(T'_{y'})|\geq|\leaves(T'_y)|-1+|\leaves(T'_{y'}|-1=|\leaves(T')|-1=|Z|-1$.
  Since, by \autoref{lem:induced subtree}, $T'$ can be computed in $O(|Z|)$ time (line~\ref{ln:Tprime}),
  we conclude that \ListSubtreeConflicts runs in $O(d)$ time.
\end{proof}

With Lemma~\ref{lem:LSC correct}, we can finally list all $d_r$ conflict triples $abc$ with
$\LCA[P]{abc}=r(P)$ or $\LCA[Q]{abc}=r(Q)$ in $O(d_r)$ time.
Thus, \ListAllConflicts completes the following tasks in the mentioned times.
\begin{compactenum}[({Task}~a)]
  \item\label{it:list} list all conflict triples touching $r(P)$ or $r(Q)$: $O(d_r)$ time;
  \item\label{it:common} compute common and uncommon leaves: $O(|X|)$ time;
  \item\label{it:induce} compute the subtrees induced by these leaf-sets: $O(|X|)$ time;
  \item\label{it:prep} preprocess these subtrees for the recursive calls: $O(|X|)$ time;
  \item\label{it:recurse} make recursive calls
\end{compactenum}
The algorithm in its current form has a worst-case running time of $O(|X|^2)$.
In the following, we show how to avoid the costly computations of \eqref{it:common}, \eqref{it:induce}, and \eqref{it:prep} if they are unnecessary and bound their running-time in $O(d_r)$ if they cannot be avoided.
To this end, note that, when called with $u_p$ and $u_q$, \ListCommonRootConflicts outputs
\[
  |\com{u_p}{u_q}|\cdot|\unc{u_p}{u_q}|\cdot(|\com{v_p}{v_q}|+|\unc{v_p}{v_q}|) \leq d_r
\]
unique conflicts. Thus, if $\com{u_p}{u_q}\ne\emptyset$ and $\unc{u_p}{u_q}\ne\emptyset$, then
\begin{align*}
  |X| & = (|\com{u_p}{u_q}| + |\unc{u_p}{u_q}|) + (|\com{v_p}{v_q}|+|\unc{v_p}{v_q}|)\\
      & \leq |\com{u_p}{u_q}|\cdot|\unc{u_p}{u_q}|\cdot (|\com{v_p}{v_q}|+|\unc{v_p}{v_q}|) + 2 \leq d_r + 2
\end{align*}
and we can thus bound the time spent for \eqref{it:common}, \eqref{it:induce}, and \eqref{it:prep} in $O(d_r)$.
By symmetry, the same holds if $\com{v_p}{v_q}\ne\emptyset$ and $\unc{v_p}{v_q}\ne\emptyset$.
It remains to explore the cases that one of $\com{u_p}{u_q}$ and $\unc{u_p}{u_q}$ and one of $\com{v_p}{v_q}$ and $\unc{v_p}{v_q}$ is empty.
\begin{compactdesc}
  \item[First, $\unc{u_p}{u_q}=\com{v_p}{v_q}=\emptyset$.]
    Then all leaves of $P_{u_p}$ are leaves of $Q_{u_q}$ and all leaves of $P_{v_p}$ are not leaves of $Q_{v_q}$.
    Thus, $Q_{v_q}$ does not have any leaves, contradicting the fact that $P$ and $Q$ are binary trees.
    Symmetrically, $\com{u_p}{u_q}=\unc{v_p}{v_q}=\emptyset$ cannot happen.
  \item[Second, $\unc{u_p}{u_q}=\unc{v_p}{v_q}=\emptyset$.]
    Then, $\leaves(u_p)=\leaves(u_q)$ and $\leaves(v_p)=\leaves(v_q)$.
    This situation can be detected in constant time, given a linear-time preprocessing of $P$ and $Q$
    that links a node $x_p$ of $P$ to a node $x_q$ of $Q$ if and only if $P_{x_p}$ and $Q_{x_q}$ have the same leaf-set (see \autoref{obs:leaf-set equiv}).
    In this case, there are no root-conflicts and none of the costly steps \eqref{it:common}--\eqref{it:prep} are necessary.
  \item[Third, $\com{u_p}{u_q}=\com{v_p}{v_q}=\emptyset$.]
    Then, changing the root-child pairing to $(u_p,v_q)$ and $(v_p,u_q)$ gives the previous case.
    The same preprocessing allows us to detect and deal with this case.
\end{compactdesc}

\noindent
The final version of the algorithm is presented as \autoref{alg:final} and we can prove its running time and correctness.

\def\sameleaves{\ensuremath{\equiv_{\leaves}}}
\begin{multialg}[t]
  \small
  \caption[Refined algorithm to enumerate all conflict triples]{Refined algorithm to enumerate all conflict triples.
    Note that we do not have to update leaf-set equivalence relations for the recursions in lines~\ref{ln:rec u} and~\ref{ln:rec v}
    since the relation computed in the parent remains valid.\label{alg:final}}
  \begin{procedure}[H]
    \caption{ListAllConflicts'()}
    \KwIn{Trees $P$ \& $Q$, preprocessed to answer leaf-set equivalence in $O(1)$}
    \KwOut{Conflict triples of $(P,Q)$}
    $(u_p,u_q),(v_p,v_q)\gets$ arbitrary pairing of children of $r(P)$ \& $r(Q)$\;\nllabel{ln:choose childs}
    \lIf{$\leaves(u_p) = \leaves(v_q)$}{swap $u_q$ and $v_q$}\nllabel{ln:swap}
    \If{$\leaves(u_p) = \leaves(u_q)$\nllabel{ln:end const part}}{    
      $\ListAllConflicts'(P_{u_p}, Q_{u_q})$\;\nllabel{ln:rec u}
      $\ListAllConflicts'(P_{v_p}, Q_{v_q})$\;\nllabel{ln:rec v}
    }
    \Else{
      \ForEach{$(x_p,x_q)\in \{(u_p,u_q),(v_p,v_q)\}$}{
        compute \& post-order the sets $\com{x_p}{x_q}$, $\unc{x_p}{x_q}$ and $\unc{x_q}{x_p}$\;\nllabel{ln:comp sets}
        compute $\lind{P}{\com{x_p}{x_q}}$, $\lind{P}{\unc{u_p}{u_q}}$, $\lind{Q}{\com{x_q}{x_p}}$, and $\lind{Q}{\unc{x_q}{x_p}}$\;
        compute the leaf-set equivalence relation for corresponding tree-pairs\;\nllabel{ln:leaf equiv}
        $\ListCommonRootConflicts(P,Q,x_p,x_q)$\;\nllabel{ln:common}
        $\ListSubtreeConflicts(P,\com{x_p}{x_q})$\;\nllabel{ln:subtreeP1}
        $\ListSubtreeConflicts(P,\unc{x_p}{x_q})$\;\nllabel{ln:subtreeP2}
        $\ListSubtreeConflicts(Q,\com{x_q}{x_p})$\;\nllabel{ln:subtreeQ1}
        $\ListSubtreeConflicts(Q,\unc{x_q}{x_p})$\;\nllabel{ln:subtreeQ2}
        $\ListAllConflicts'(\lind{P}{\com{x_p}{x_q}}, \lind{Q}{\com{x_q}{x_p}})$\;\nllabel{ln:rec com}
      }
      $\ListAllConflicts'(\lind{P}{\unc{u_p}{u_q}}, \lind{Q}{\unc{v_q}{v_p}})$\;\nllabel{ln:rec unc}
      $\ListAllConflicts'(\lind{P}{\unc{v_p}{v_q}}, \lind{Q}{\unc{u_q}{u_p}})$\;
    }
  \end{procedure}
\end{multialg}

\begin{lemma}\label{lem:final alg correct}
  \autoref{alg:final} outputs a triple if and only if it is a conflict.
  Moreover, no conflict is listed twice and \autoref{alg:final} runs in $O(|X|+d)$ time,
  where $X$ is the label set of the input trees and $d$ is the total number of conflicts listed.
\end{lemma}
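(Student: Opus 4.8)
The plan is to prove, by induction on $|X|$, that \autoref{alg:final} lists exactly the conflicts, lists no conflict twice, and runs in $O(|X|+d)$ time. For $|X|\le 2$ there is no triple and nothing is listed, so let $|X|\ge 3$ and let $u_p,v_p$ and $u_q,v_q$ be the children of $r(P)$ and $r(Q)$ after the pairing of line~\ref{ln:choose childs} and the possible swap of line~\ref{ln:swap}. A convenient reformulation to use throughout: in a binary tree every triple $abc$ has a unique \emph{biased pair} $\pi_T(abc)$ (the two of its leaves lying in a common child of $\LCA[T]{abc}$); $abc$ is a conflict of $(P,Q)$ iff $\pi_P(abc)\ne\pi_Q(abc)$; and $abc$ touches $r(P)$ iff its biased-pair leaves lie under one child of $r(P)$ and the third leaf under the other, in which case $\pi_P(abc)$ is exactly that pair.

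The crux --- and, I expect, the main obstacle --- is to show that the direct (non-recursive) output of one invocation, namely the calls to \ListCommonRootConflicts and \ListSubtreeConflicts, is exactly the set of conflicts touching $r(P)$ or $r(Q)$, each listed once; I would split these conflicts into those touching both roots and those touching exactly one. If $abc$ touches both roots then $\pi_P(abc)$ and $\pi_Q(abc)$ differ; writing $c$ for the leaf outside $\pi_P(abc)$, the two leaves of $\pi_P(abc)$ lie under different children of $r(Q)$ (since $\pi_Q(abc)$ pairs $c$ with one of them), so the iteration whose $x_p$ is the $r(P)$-child containing $\pi_P(abc)$ lists $abc$ via \ListCommonRootConflicts --- with $\alpha$ the leaf of $\pi_P(abc)$ under $x_q$, $\beta$ the other leaf, $\gamma=c$ --- and no other iteration does, as only $c$ lies under the remaining $r(P)$-child. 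Conversely, every triple \ListCommonRootConflicts lists has $\alpha,\beta$ under one $r(P)$-child and $\gamma$ under the other, hence touches $r(P)$ with biased pair $\{\alpha,\beta\}$, while $\alpha,\beta$ lie under different $r(Q)$-children, so $\pi_Q\ne\{\alpha,\beta\}=\pi_P$ and it is a conflict touching both roots. For a conflict touching $r(P)$ but not $r(Q)$ (the other case is symmetric), all of $a,b,c$ lie under one $r(Q)$-child, $\pi_P(abc)=\{a,b\}$ lies under one $r(P)$-child, and $\pi_P\ne\pi_Q$ forces $ab\nmid_Q c$; the pair $\{a,b\}$ falls into exactly one of the four ``cells'' $\leaves(P_{x_p})\cap\leaves(Q_{x_q})$, which (by \autoref{obs:symmetry}) are precisely the leaf-sets handed to the $\ListSubtreeConflicts(Q,\cdot)$ calls over the two iterations, while $c$ lies in a different cell, so by \autoref{lem:LSC correct} the triple is listed exactly once --- in the call for $\{a,b\}$'s cell. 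Soundness of all these calls falls out of the same bookkeeping: a triple emitted by $\ListSubtreeConflicts(T,Z)$ cannot have its third element under the ``wrong'' child of $r(T)$, for then $ab\nmid_T c$ would fail; symmetrically for the $\ListSubtreeConflicts(P,\cdot)$ calls and conflicts touching $r(Q)$ but not $r(P)$.

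For conflicts touching neither root, \autoref{obs:all conflicts} puts $abc$ inside one of $\com{u_p}{u_q},\com{v_p}{v_q},\com{u_p}{v_q},\com{v_p}{u_q}$, which by \autoref{obs:symmetry} are exactly the leaf-sets recursed on: in the branch of lines~\ref{ln:rec u}--\ref{ln:rec v} two of them are empty and there are no root-conflicts (there $\leaves(u_p)=\leaves(u_q)$, so every root-touching triple has the same biased pair in $P$ and $Q$ and is therefore not a conflict), and otherwise all four appear among the recursive calls of the \textbf{else} branch. These four sets are pairwise disjoint proper subsets of $X$, so \autoref{lem:recurse} and the induction hypothesis give that each such conflict is listed exactly once, by one recursive call; and any triple a recursive call emits is (by \autoref{lem:recurse}) a conflict confined to a proper root-subtree, hence touches no root and is disjoint from the current invocation's direct output. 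With the previous paragraph, \autoref{alg:final} lists exactly the conflicts, each once.

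For the running time, work over the recursion tree $\mathcal R$ whose internal nodes are \emph{degenerate} (the branch of lines~\ref{ln:rec u}--\ref{ln:rec v}, two children) or \emph{productive} (the \textbf{else} branch, whose four recursion leaf-sets partition the node's and none of which is the whole set, so recursing only on the nonempty ones leaves at least two children), and whose leaves are the $O(1)$-work base calls. Every internal node has $\ge 2$ children whose leaf-sets partition the parent's, so $\mathcal R$'s leaves carry a partition of $X$ into nonempty parts and hence number at most $|X|$, which also bounds the internal nodes by $|X|$. Degenerate and base nodes do $O(1)$ non-recursive work each --- the degenerate ones using the inherited leaf-set-equivalence data of \autoref{obs:leaf-set equiv} --- so together they cost $O(|X|)$. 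A productive node with $m$ leaves and $d_r$ listed triples spends $O(d_r)$ on the listings (by construction of \ListCommonRootConflicts and by \autoref{lem:LSC correct}) and $O(m)$ on computing and post-ordering the common/uncommon sets, the induced subtrees (\autoref{lem:induced subtree}), and the leaf-set-equivalence data (\autoref{obs:leaf-set equiv}); since $m\le d_r+2$ (the inequality already derived in the text for non-degenerate calls) and $d_r\ge 1$, it does $O(d_r)$ work overall. As every conflict is listed in exactly one productive node, $\sum_{\text{productive}}d_r=d$; adding the $O(|X|)$ from the other nodes and the one-time $O(|X|)$ top-level preprocessing gives $O(|X|+d)$.
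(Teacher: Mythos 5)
Your proposal is correct and takes essentially the same approach as the paper: conflicts touching $r(P)$ or $r(Q)$ are listed directly and shown to be output soundly, completely and uniquely via the disjoint four-cell structure together with \autoref{lem:LSC correct}, the remaining conflicts are delegated to the recursive calls via \autoref{lem:recurse} and \autoref{obs:all conflicts}, and the running time is bounded by the same recursion-tree accounting using $|X|\le d_r+2$ at non-degenerate nodes and constant work (via \autoref{obs:leaf-set equiv}) at degenerate ones. The unique-biased-pair reformulation and the explicit induction on $|X|$ are a slightly tidier packaging of the paper's case analysis rather than a genuinely different route.
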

\begin{proof}
  Let line~\ref{ln:pairs} of \ListAllConflicts produce the pairs $(u_p,u_q)$ and $(v_p,v_q)$.

  ``$\Rightarrow$'':
  Let $abc$ be a triple that is listed by \autoref{alg:final}.
  If \ListCommonRootConflicts lists $abc$ then, without loss of generality,
  $a\in\com{u_p}{u_q}$, and $b\in\unc{u_p}{u_q}$, and $c\in X\setminus\leaves(u_p)$.
  Thus, $a\leq u_p,u_q$, and $b\leq u_p,v_q$, and $c\leq v_p$.
  Now, if $c\leq v_q$, then $ab|_P c$ and $a|_Q bc$, otherwise, $ab|_P c$ and $ac|_Q b$.
  In both cases, $abc$ is a conflict.
  Otherwise, $abc$ is listed by \ListSubtreeConflicts and, without loss of generality,
  let the first argument be $P$ (lines~\ref{ln:subtreeP1} and~\ref{ln:subtreeP2}).
  Then, by construction of \ListSubtreeConflicts, there is some
  $Z\in\{\com{x_p}{x_q},\unc{x_p}{x_q}\}$ and
  some $y$ such that
  $a,c<_P y$, and $a,b\in Z$, and $c\notin Z$, and $y<\LCA[P]{ab}$.
  Thus $ac|_P b$.
  Now, if $Z=\com{x_p}{x_q}$ then, as $c<y<x_p$ and $c\notin Z$, we have $c\nleq x_q$, but $a,b<x_q$, implying $ab|_Q c$.
  If $Z=\unc{x_p}{x_q}$ then, as $c<y<x_p$ and $c\notin Z$, we have $c\leq x_q$, but $a,b\nleq x_q$, implying $ab|_Q c$.
  In both cases, $abc$ is a conflict.

  ``$\Leftarrow$'':
  Let $abc$ be a conflict between $P$ and $Q$ and, 
  by symmetry among $abc$, let $ab|_P c$ and $ac|_Q b$.
  Further, by symmetry among $u_p$ and $v_p$, let $ab< u_p$.
  First, suppose that $\LCA[P]{abc}=r(P)$, that is, $c\leq v_p$.
  If $abc<u_q$ (or $abc<v_q$), then there is $Z:=\com{u_p}{u_q}$ (or $Z:=\unc{u_p}{u_q}$) with $a,b\in Z$ and $c\notin Z$ and $ab\nmid_Q c$ and,
  by Lemma~Lemma~\ref{lem:LSC correct}, $abc$ is listed by \ListSubtreeConflicts in line~\ref{ln:subtreeQ1} (or line~\ref{ln:subtreeQ2}).
  Otherwise, $\LCA[Q]{abc}=r(Q)$, that is, $ac<u_q$ and $b\leq v_q$ or vice versa (since $ac|_Q b$).
  But then, $ac<u_q$ (or $ac<v_p$) and $b\leq v_q$ (or $b\leq u_p$),
  implying $a\in\com{u_p}{u_q}$, and $b\in\unc{u_p}{u_q}$ (or $b\in\com{u_p}{u_q}$, and $a\in\unc{u_p}{u_q}$), and $c\not< u_p$ and,
  thus, $abc$ is listed by \ListCommonRootConflicts in line~\ref{ln:common}.
  Second, suppose that $\LCA[P]{abc}<r(P)$, that is, $c\leq u_p$.
  If $\LCA[Q]{abc}=r(Q)$, then $ac<u_q$ and $b<v_q$ or vice versa.
  But then, there is $Z:=\com{u_p}{u_q}$ (or $Z:=\unc{u_p}{u_q}$) with $a,c\in Z$, and $b\notin Z$ and $ac\nmid_P b$ and,
  by Lemma~\ref{lem:LSC correct}, $acb$ is listed by \ListSubtreeConflicts in line~\ref{ln:subtreeP1} (or line~\ref{ln:subtreeP2}).
  Otherwise, $\LCA[Q]{abc}<r(Q)$.
  If $abc<u_q$ then, by induction on the recursion depth,
  $abc$ is listed by the recursive call on line~\ref{ln:rec com} (or line~\ref{ln:rec u} if $\leaves(u_p)=\leaves(u_q)$).
  Otherwise, $abc<v_q$ and, by induction on the recursion depth,
  $abc$ is listed by the recursive call on line~\ref{ln:rec unc} (or line~\ref{ln:rec u} if $\leaves(u_p)=\leaves(v_q)$, as $u_q$ and $v_q$ would have been swapped in line~\ref{ln:swap} in this case).

  \smallskip
  To show that no conflict $abc$ is output twice, assume the contrary.
  Again, symmetry lets us suppose $ab|_P c$, and $ac|_Q b$, and $ab<u_p$.
  Note that the two occurrences of $abc$ cannot be output by
  \begin{compactitem}
    \item different recursive calls, since all tree-pairs in recursive calls have pairwise disjoint sets of leaf-labels,
    \item the same call to \ListCommonRootConflicts since $\com{x_p}{x_q}$, and $\unc{x_p}{x_q}$ and $X\setminus\leaves(x_p)$ are pairwise disjoint, or
    \item the same call to \ListSubtreeConflicts by Lemma~\ref{lem:LSC correct}.
  \end{compactitem}
  Thus, $abc$ is listed by different calls in the same node of the recursion tree.
  If $\LCA[P]{abc}=r(P)$ and $\LCA[Q]{abc}=r(Q)$, then $abc$ is listed by both calls to \ListCommonRootConflicts,
  implying that $abc$~intersects $\com{u_p}{u_q}$ and $\unc{u_p}{u_q}$ as well as $\com{v_p}{v_q}$ and $\unc{v_p}{v_q}$.
  However, as these sets are disjoint, this cannot happen.
  If $\LCA[P]{abc}=r(P)$ and $\LCA[Q]{abc}\ne r(Q)$, then $abc<_Q u_q$ or $abc<_Q v_q$ and $c\leq_P v_p$.
  If $abc<_Q u_q$, then $ab\subseteq\com{u_p}{u_q}$ and
  $abc$ can be listed only in the call to \ListSubtreeConflicts in line~\ref{ln:subtreeQ1} for $(x_p,x_q)=(u_p,u_q)$.
  If $abc<_Q v_q$, then $ab\subseteq\unc{v_q}{v_p}$ and
  $abc$ can be listed only in the call to \ListSubtreeConflicts in line~\ref{ln:subtreeQ2} for $(x_p,x_q)=(v_p,v_q)$.
  The case that $\LCA[P]{abc}\ne r(P)$ and $\LCA[Q]{abc}=r(Q)$ is completely analogous.
  Since the case that $\LCA[P]{abc}\ne r(P)$ and $\LCA[Q]{abc}\ne r(Q)$ is treated in a different recursive step, this case distinction is exhaustive
  and $abc$ is indeed not listed twice.
  
  \smallskip
  To show the running time, let $\T$ denote the recursion tree for input $(P,Q)$ and,
  for each node $v$ of $\T$, let $\delta_v$ and $\gamma_v$ denote the time spent
  in lines~\ref{ln:choose childs}--\ref{ln:end const part} and
  in lines~\ref{ln:comp sets}--\ref{ln:subtreeQ2}, respectively.
  Then, the algorithm finishes in $\sum_{v\in V(\T)}\left(\delta_v+\gamma_v+O(1)\right)$ time.
  First, using the leaf-set equivalence relation computed in line~\ref{ln:leaf equiv} in the parent of $v$
  (or pre-computed if $v$ is the root),
  we execute lines~\ref{ln:choose childs}--\ref{ln:end const part} in constant time, that is, $\delta_v\in O(1)$.
  Second, by the consideration above, tasks~\eqref{it:list}--\eqref{it:prep} can be completed in $O(d_r)$ time,
  where $d_r$ is the number of triples output by \ListCommonRootConflicts and \ListSubtreeConflicts,
  that is, in lines~\ref{ln:common}--\ref{ln:subtreeQ2}.
  Then, we can bound the total running time by
  \[
    \smashoperator[r]{\sum_{v\in V(\T)}}\delta_v + \smashoperator[r]{\sum_{v\in V(\T)}}\gamma_v = O(|\T|) + O(\sum d_r) = O(|\T|+d)
  \]
  where $\sum d_r = d$ because each conflict has a root and no conflict is listed twice (see Lemma~\ref{lem:LSC correct}).
  Finally,
  note that the leaf-sets of the recursive calls of \ListAllConflicts' form a partition of $X$ and,
  therefore, each leaf of $\T$ has a ``private'' element of $X$ that occurs only in that leaf, implying $|\T|\in O(|X|)$.
\end{proof}

\begin{theorem}\label{thm:result}
  Given phylogenetic trees $P$ and $Q$ on the same set of $n$ taxa,
  \autoref{alg:final} enumerates all $d$ conflict triples in $O(n+d)$~time.
\end{theorem}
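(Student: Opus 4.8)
The plan is to derive \autoref{thm:result} directly from \autoref{lem:final alg correct}: instantiating that lemma with $n := |X|$ already yields all three components of the claim. It states that \autoref{alg:final} emits a triple if and only if the triple is a conflict of $(P,Q)$, so precisely the conflict triples (and nothing spurious) are produced; that no conflict is produced twice, so the output is a true enumeration; and that the whole run takes $O(|X|+d) = O(n+d)$ time. Hence the proof amounts to quoting \autoref{lem:final alg correct}.

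Should one prefer to assemble the argument from scratch rather than cite that lemma, the three key facts are already isolated earlier in the excerpt. First, \autoref{lem:recurse} shows that ``being a conflict'' survives restriction to a leaf subset; this is what legitimizes handling conflicts that touch $r(P)$ or $r(Q)$ separately from those confined to the induced subtrees, and then recursing on the latter. Second, \autoref{obs:all conflicts} together with the pairwise disjointness of the four common-leaf sets guarantees that every conflict not touching either root lands in exactly one recursive call, making the recursion simultaneously exhaustive and overlap-free. Third, \autoref{lem:LSC correct} supplies the subroutine \ListSubtreeConflicts, which replaces the naive ``$ab\nmid_T c$?'' scans and lists the relevant triples in $O(d)$ time without repetition --- this is the single ingredient that converts the $O(n^2)$ first draft into an output-sensitive algorithm.

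With these in hand, correctness would be a case analysis on the two questions ``is $\LCA[P]{abc}=r(P)$?'' and ``is $\LCA[Q]{abc}=r(Q)$?''. If at least one answer is ``yes'', the conflict touches a root and I would show it is caught at the current node of the recursion, either by \ListCommonRootConflicts or by one of the \ListSubtreeConflicts calls; if both answers are ``no'', \autoref{obs:all conflicts} and \autoref{lem:recurse} push the conflict into the unique recursive call whose leaf-set contains it, and induction on the recursion depth closes that branch. The running-time bound is an amortization over the recursion tree $\T$: each node does $O(1)$ bookkeeping --- here the leaf-set-equivalence preprocessing of \autoref{obs:leaf-set equiv} lets one recognize the degenerate child-pairings in constant time --- plus $O(d_r)$ work, where $d_r$ counts the conflicts emitted at that node; the inequality $|X|\le d_r+2$ for the productive pairings is what also lets one charge the set computations and subtree inductions to $d_r$. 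Summing, $\sum_v d_r = d$ (each conflict has a root and is listed once) and $|\T|\in O(|X|)$ because the leaf-sets of the recursive calls partition $X$, giving total time $O(|X|+d)$.

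The step I expect to be the real obstacle --- and the one \autoref{lem:final alg correct} treats most carefully --- is showing no conflict is listed twice, specifically for a conflict $abc$ with $\LCA[P]{abc}=r(P)$ but $\LCA[Q]{abc}\ne r(Q)$ (and its mirror). Such a triple is a genuine collision hazard: a priori it could be emitted both as a root-conflict by \ListCommonRootConflicts and by a \ListSubtreeConflicts call. Ruling this out requires pinning down, for each pattern of which roots are touched, the unique line of \autoref{alg:final} responsible for the triple and checking these responsibilities are pairwise disjoint; disjointness of the common-leaf and uncommon-leaf sets does the work, but must be invoked at the right granularity --- e.g.\ observing that $abc<_Q u_q$ forces $ab\subseteq\com{u_p}{u_q}$, so the triple can be emitted only in line~\ref{ln:subtreeQ1} for the pairing $(u_p,u_q)$.
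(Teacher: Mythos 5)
Your proposal is correct and matches the paper exactly: the paper gives no separate argument for \autoref{thm:result}, treating it as an immediate consequence of \autoref{lem:final alg correct} (with $n=|X|$), which is precisely your first paragraph. Your supplementary sketch of how that lemma is established (the root/non-root case analysis, the $|X|\le d_r+2$ charging argument, the no-double-listing analysis, and $|\T|\in O(|X|)$) also follows the paper's own proof of the lemma.
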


\section{Conclusion}\label{sec:conc}

We have shown how to list all conflict triples between two phylogenetic trees in $O(n+d)$~time
where $n$ is the number of taxa and $d$ is the number of listed conflicts.
This improves the previously used, trivial $\Theta(n^3)$-time algorithm that tests for each leaf-triple $abc$ for being a conflict.
The presented algorithm is fastest-possible (up to constant factors), since all algorithms solving the problem
must at least read the input and write the output.

A simple next step is to extend the algorithm to non-binary outbranchings.
More challengingly,
we want to reconsider other polynomial-time enumeration problems parameterized by the length of the output list
in hope to produce more ``fastest-possible'' algorithms.
We also plan to analyze real-world phylogenetic trees to see whether the parameter is sufficiently smaller than $n^3$
to make it worth implementing in practice.

\paragraph*{Acknowledgments}
I thank the \emph{Institut de Biologie Computationelle} for funding my research,
as well as my colleagues Krister Swenson and Celine Scornavacca for fruitful discussions.

\let\oldthebibliography=\thebibliography
\let\endoldthebibliography=\endthebibliography
\renewenvironment{thebibliography}[1]{%
  \begin{oldthebibliography}{#1}%
    \setlength{\parskip}{0ex}%
    \setlength{\itemsep}{0ex}%
    \footnotesize
}{
    \end{oldthebibliography}%
}
\bibliographystyle{abbrvnat}
\bibliography{trip.bib}

\begin{thebibliography}{11}
\providecommand{\natexlab}[1]{#1}
\providecommand{\url}[1]{\texttt{#1}}
\expandafter\ifx\csname urlstyle\endcsname\relax
  \providecommand{\doi}[1]{doi: #1}\else
  \providecommand{\doi}{doi: \begingroup \urlstyle{rm}\Url}\fi

\bibitem[Bender and Farach{-}Colton(2000)]{conf-BF00}
M.~A. Bender and M.~Farach{-}Colton.
\newblock The {LCA} problem revisited.
\newblock In \emph{{LATIN} 2000: Theoretical Informatics, 4th Latin American
  Symposium, Punta del Este, Uruguay, April 10-14, 2000, Proceedings}, volume
  1776 of \emph{LNCS}, pages 88--94. Springer, 2000.

\bibitem[Brodal et~al.(2013)Brodal, Fagerberg, Mailund, Pedersen, and
  Sand]{conf-BFM+13}
G.~S. Brodal, R.~Fagerberg, T.~Mailund, C.~N. Pedersen, and A.~Sand.
\newblock Efficient algorithms for computing the triplet and quartet distance
  between trees of arbitrary degree.
\newblock In \emph{Proceedings of the Twenty-Fourth Annual ACM-SIAM Symposium
  on Discrete Algorithms}, pages 1814--1832. Society for Industrial and Applied
  Mathematics, 2013.

\bibitem[Byrka et~al.(2010)Byrka, Guillemot, and Jansson]{journal-BGJ10}
J.~Byrka, S.~Guillemot, and J.~Jansson.
\newblock New results on optimizing rooted triplets consistency.
\newblock \emph{Discrete Applied Mathematics}, 158\penalty0 (11):\penalty0 1136
  -- 1147, 2010.
\newblock ISSN 0166-218X.
\newblock \doi{10.1016/j.dam.2010.03.004}.

\bibitem[Chauve et~al.(2017)Chauve, Jones, Lafond, Scornavacca, and
  Weller]{CJL+17}
C.~Chauve, M.~Jones, M.~Lafond, C.~Scornavacca, and M.~Weller.
\newblock Constructing a consensus phylogeny from a leaf-removal distance.
\newblock under review, 2017.

\bibitem[Cole et~al.(2000)Cole, Farach-Colton, Hariharan, Przytycka, and
  Thorup]{journal-CFR+00}
R.~Cole, M.~Farach-Colton, R.~Hariharan, T.~Przytycka, and M.~Thorup.
\newblock An $o(n \log n)$ algorithm for the maximum agreement subtree problem
  for binary trees.
\newblock \emph{SIAM Journal on Computing}, 30\penalty0 (5):\penalty0
  1385--1404, 2000.

\bibitem[Fomin et~al.(2015)Fomin, Lokshtanov, Pilipczuk, Saurabh, and
  Wrochna]{arxiv-FLP+15}
F.~V. Fomin, D.~Lokshtanov, M.~Pilipczuk, S.~Saurabh, and M.~Wrochna.
\newblock Fully polynomial-time parameterized computations for graphs and
  matrices of low treewidth.
\newblock \emph{CoRR}, abs/1511.01379, 2015.
\newblock URL \url{http://arxiv.org/abs/1511.01379}.

\bibitem[Harel and Tarjan(1984)]{journal-HT84}
D.~Harel and R.~E. Tarjan.
\newblock Fast algorithms for finding nearest common ancestors.
\newblock \emph{SIAM Journal on Computing}, 13\penalty0 (2):\penalty0 338--355,
  1984.

\bibitem[Hochstein and Weihe(2007)]{conf-HW07}
J.~M. Hochstein and K.~Weihe.
\newblock Maximum s-t-flow with k crossings in o(k3n log n) time.
\newblock In \emph{Proceedings of the Eighteenth Annual ACM-SIAM Symposium on
  Discrete Algorithms}, SODA '07, pages 843--847, Philadelphia, PA, USA, 2007.
  Society for Industrial and Applied Mathematics.
\newblock ISBN 978-0-898716-24-5.
\newblock URL \url{http://dl.acm.org/citation.cfm?id=1283383.1283473}.

\bibitem[Jansson et~al.(2006)Jansson, Nguyen, and Sung]{journal-JBS06}
J.~Jansson, N.~B. Nguyen, and W.-K. Sung.
\newblock Algorithms for combining rooted triplets into a galled phylogenetic
  network.
\newblock \emph{SIAM Journal on Computing}, 35\penalty0 (5):\penalty0
  1098--1121, 2006.
\newblock \doi{10.1137/S0097539704446529}.

\bibitem[Mertzios et~al.(2017)Mertzios, Nichterlein, and
  Niedermeier]{arxiv-MNN17}
G.~B. Mertzios, A.~Nichterlein, and R.~Niedermeier.
\newblock Linear-time algorithm for maximum-cardinality matching on
  cocomparability graphs.
\newblock \emph{CoRR}, abs/1703.05598, 2017.
\newblock URL \url{http://arxiv.org/abs/1703.05598}.

\bibitem[Ranwez et~al.(2010)Ranwez, Criscuolo, and Douzery]{journal-RCD10}
V.~Ranwez, A.~Criscuolo, and E.~J. Douzery.
\newblock Supertriplets: A triplet-based supertree approach to phylogenomics.
\newblock \emph{Bioinformatics}, \penalty0 (26):\penalty0 i115--i123, 2010.

\end{thebibliography}

\end{document}